\def\proofbox{\hbox{\vrule height 1.4ex depth 0.2ex width 0.4em}}\def\qed{}
\tikzstyle{every picture} = [>=latex]
\begin{document}

\title{Using Neighborhood Diversity to Solve Hard Problems}
\author{Robert~Ganian
\thanks{This research has been supported by the Czech research grant
202/11/0196.}}
\institute{
	Faculty of Informatics, Masaryk University\\
	Botanick\'a 68a, Brno, Czech Republic\\[.5ex]
\email{ganian@mail.muni.cz}}

\maketitle

\begin{abstract}
Parameterized algorithms are a very useful tool for dealing with NP-hard problems on graphs.
Yet, to properly utilize parameterized algorithms it is necessary to choose the right parameter based on the type of problem and properties of the target graph class.
Tree-width is an example of a very successful graph parameter, however it cannot be used on dense graph classes and there also exist problems which are hard even on graphs of bounded tree-width.
Such problems can be tackled by using vertex cover as a parameter, however this places severe restrictions on admissible graph classes.

Michael Lampis has recently introduced neighborhood diversity, a new graph parameter which generalizes vertex cover to dense graphs.
Among other results, he has shown that simple parameterized algorithms exist for a few problems on graphs of bounded neighborhood diversity.
Our article further studies this area and provides new algorithms parameterized by neighborhood diversity for the p-Vertex-Disjoint Paths, Graph Motif and Precoloring Extension problems -- the latter two being hard even on graphs of bounded tree-width.
\end{abstract}

\vspace{-0.8cm}

\section{Introduction}

\vspace{-0.2cm}
Parameterized algorithmics are a very successful approach to dealing with NP-hard problems on graphs.
The idea is that in real-life applications it is usually not necessary to solve problems on general graphs, but rather on graphs with some kind of structure present. 
It is then possible to use a structural parameter $k$ to describe this structure and use it to design parameterized algorithms which run in polynomial time as long as $k$ is bounded.

With respect to parameterized algorithms, we distinguish between those which run in time -- considerng a parameter $k$, an input of size $n$ and some computable function $f$ -- $O(f(k)\cdot n^{O(1)})$ and those with a runtime of $O(n^{f(k)})$.
Algorithms of the first type are called Fixed-Parameter Tractable (FPT in short) while those of the second type are called XP algorithms.
An FPT algorithm is typically much more practical than an XP one, and we will mostly be interested in obtaining FPT algorithms.
We refer to \cite{df99} for further information regarding parameterized complexity.

It is well known that FPT algorithms exist for a large number of NP-hard problems on graphs of bounded tree-width.
However, not all problems are efficiently solvable on graphs of bounded tree-width
and, in the context of parameterized algorithmics, vertex cover is often successfully used to solve these problems.
These include 
Precoloring Extension\cite{fgk10}, Equitable Coloring\cite{fgk10}, Equitable Connected Partition\cite{efg09} and Bandwidth \cite{flm08}, to name only a few.

Unfortunately, vertex cover only attains low values on severely restricted graph classes and so is only rarely applicable in practice.
It would be very useful to have a less restrictive parameter than vertex cover capable of solving such hard problems.
Neighborhood diversity, introduced by Michael Lampis in ESA 2010 \cite{lam10}, is a promising candidate for such a parameter.
Lampis' article mainly focuses on deciding $MS_1$ expressions on graphs of bounded vertex cover and neighborhood diversity, and has shown that both parameters may be used to decide $MS_1$ expressions in FPT time where the tower of exponents does not grow with quantifier alternations (unlike tree-width).
It also presented three simple algorithms for Hamiltonian Cycle, Graph Chromatic Number and Edge Dominating Set parameterized by neighborhood diversity (c.f. \cite[Theorem 6]{lam10}).

Our article further develops this direction and provides new algorithms parameterized by neighborhood diversity for the Graph Motif, p-Vertex-Disjoint Paths and Precoloring Extension problems.
We remark that while the problems solved in Lampis' original article may be solved in XP time on graphs of bounded rank-width (or clique-width), the problems considered here are much harder and cannot be efficiently solved on this graph class. In fact, two of the considered problems remain hard even on graphs of bounded tree-width.
Additionally, while FPT algorithms on vertex cover do exist for these problems, these cannot be straightforwardly transferred to neighborhood diversity due to the significant structural differences between these parameters.

\section{Vertex Cover and Neighborhood Diversity}
\begin{definition}
Given a graph $G=(V,E)$, a set $X\subseteq V$ is a vertex cover if every edge in $G$ is incident to at least one vertex in $X$.
\end{definition}

For readers not familiar with parameterized algorithm design, we briefly introduce the concept via some examples of algorithms for solving vertex cover parameterized by vertex cover.
Notice that designing an XP algorithm for this problem is trivial -- we may simply try selecting $k$ vertices from $V$ and see if they form a vertex cover. There are at most $|V|^k$ different possibilities for selecting the vertices, and checking whether they form a cover would take at most $|V|^2$ time. This gives a total runtime of $|V|^{k+2}$.

The problem can also be solved by a FPT algorithm.
We simply recursively repeat the following: first, delete all edges incident to a cover vertex, all cover vertices and all vertices of degree 0. We then choose any remaining vertex and branch out based on whether it is to be included in the cover or not.
If it is included in the cover, we have reduced the remaining number of cover vertices by 1, otherwise we know that all of its neighbors need to be cover vertices, reducing the remaining number of cover vertices by 1 or more.
Once $k$ cover vertices have been allocated in total, we simply check whether any edges remain uncovered.
And since each branching reduces the number of cover vertices by at least 1, we obtain a significantly improved total runtime of $2^k\cdot |V|^2$.

Finally, we remark that the $|V|^2$ time for checking whether $X$ is indeed a vertex cover can be easily improved to $k|V|$, and that the currently best known FPT algorithm for self-parameterized vertex cover in fact runs in time $O(1.2738^k+k|V|)$~\cite{cax10}.

We need one further notion before introducing neighborhood diversity. Note that for $v\in V$, $N(v)$ denotes the neighborhood of $v$.
\begin{definition}
We say that two vertices $v, v'$ of $G(V,E)$ have the same \emph{type} iff $N(v)\backslash \{v'\}=N(v')\backslash \{v\}$.
\end{definition}

The relation of having the same type is an equivalence. Furthermore, if a graph has vertex cover $k$, it cannot have more than $2^k+k$ equivalence classes of types (or type classes): if two non-cover vertices are adjacent to the same cover vertices, they belong to the same class (at most $2^k$ possible classes) and one class for each cover vertex ($k$ classes).

This type structure is very often utilized in parameters on vertex cover (see any of the parameterized algorithms referred to in the introduction).
The idea behind neighborhood diversity is based on this type structure.

\begin{definition}[\cite{lam10}]
A graph $G(V,E)$ has neighborhood diversity at most $w$, if there exists a
partition of $V$ into at most $w$ sets, such that all the vertices in each set have the same type.
\end{definition}

Notice that all vertices in a given type not only have the same neighborhood in $G$, but also form a clique or independent set in $G$. 
For algorithmic purposes, it is often useful to consider a \emph{type graph} $H$ of a graph $G$, where each vertex of $H$ is a type class in $G$ and two vertices in $H$ are adjacent iff there is a complete bipartite clique between these type classes in $G$.
It is not hard to see that the definitions force there to either be a complete bipartite clique or no edges between any two type classes.
The key property of graphs of bounded neighborhood diversity is that their type graphs have bounded size.


So, how does neighborhood diversity compare with other parameters?
As mentioned above, graphs of bounded vertex cover have bounded neighborhood diversity (although the bound may be single-exponential), but the opposite is not true, since large cliques have a neighborhood diversity of 1.
Neighborhood diversity is not comparable with tree-width -- large cliques and large trees being examples of graphs with one parameter bounded and the other unbounded.
Finally, graphs of bounded neighborhood diversity also have bounded clique-width and rank-width (see e.g. \cite{gh09} for a brief introduction to these measures).

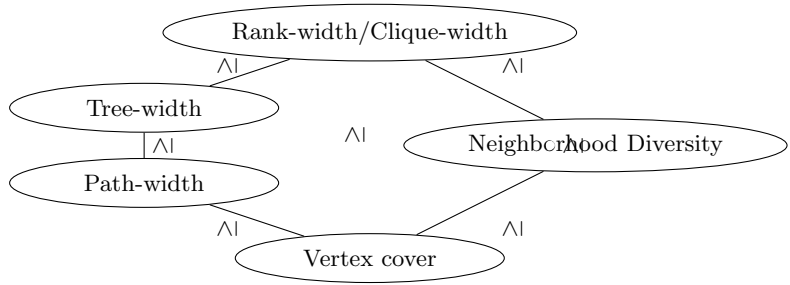
\begin{figure}[h]
\begin{center}
\begin{tikzpicture}[scale=1]
\tikzstyle{cloud} = [draw, ellipse, minimum height=2em, minimum width=11em]

\node (vc) at (0,-2)[cloud] {Vertex cover};
\node (kc) at (3,-0.5)[cloud] {Neighborhood Diversity};
\node (pw) at (-3,-1)[cloud] {Path-width};
\node (tw) at (-3,0)[cloud] {Tree-width};
\node (rw) at (0,1)[cloud] {Rank-width/Clique-width};

\draw (vc) -- (kc) -- (rw);
\draw (vc) -- (pw) -- (tw) -- (rw);
\draw (pw);

\tikzstyle{every node}=[]
\tikzstyle{empty}=[draw, shape=circle, minimum size=3pt,inner sep=0pt, fill=white, color=white]

\node (vk) at (2.2,-1.6)[empty, label=left:\begin{sideways}$\geq$\end{sideways}] {};
\node (lr) at (2.4,-0.5)[empty, label=right:\begin{sideways}$\geq$\end{sideways}] {};
\node (lr) at (2.2,0.56)[empty, label=left:\begin{sideways}$\geq$\end{sideways}] {};
\node (vp) at (-2.2,-1.6)[empty, label=right:\begin{sideways}$\geq$\end{sideways}] {};
\node (pt) at (-2.45,-0.5)[empty, label=left:\begin{sideways}$\geq$\end{sideways}] {};
\node (tr) at (-2.2,0.56)[empty, label=right:\begin{sideways}$\geq$\end{sideways}] {};
\node (pl) at (-0.5,-0.35)[empty, label=right:\begin{sideways}$\geq$\end{sideways}] {};

\end{tikzpicture}
\end{center}

\caption[center]{Relationships between selected graph parameters}
\end{figure}

The key result of Lampis' paper is an algorithm for $MS_1$ model checking on neighborhood diversity and vertex cover, where the height of the tower of exponents is fixed in the parameter (unlike $MS_1$ model checking on rank-width or $MS_2$ model checking on tree-width).
It is interesting that for $MS_1$ model checking, neighborhood diversity is actually exponentially faster than vertex cover, completely offsetting the worst case of neighborhood diversity being exponentially larger than vertex cover.

\begin{theorem}[{\cite[Theorem 2]{lam10}}]
There exists an algorithm which, given a graph $G$ with $l$ labels, neighborhood diversity
at most $w$ and an $MS_1$ formula $\phi$ with at most $q_S$ set variables and $q_V$ vertex variables, decides if $G \vDash \phi$ in time $2^{O(2^{q_S} (w+l)q^2_Sq_V log_{q_V}}) \cdot \phi$.
\end{theorem}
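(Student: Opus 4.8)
The plan is to exploit the fact that, in a graph of bounded neighborhood diversity, the vertices of a single type class are completely interchangeable under $MS_1$ semantics, so that an assignment to a set variable matters only through the number of vertices it uses in each type class, and even that count is relevant only up to a threshold governed by the first‑order quantifiers. First I would compute a type partition of $G$ into at most $w$ classes in polynomial time (e.g.\ by hashing adjacency rows), together with its type graph $H$ in which every node is flagged as a clique or an independent set; then, treating the $l$ labels as marking distinguished vertices, I would refine the partition so that each labelled vertex forms its own class, obtaining at most $w+l$ classes. Over the resulting enriched structure every atom $E(u,v)$, $u=v$, or ``$u$ carries label $i$'' is decided purely by the classes of $u$ and $v$ together with their equality pattern, by a lookup in $H$.

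The combinatorial heart is an indistinguishability lemma. Once all set variables $S_1,\dots,S_{q_S}$ have been assigned, each of the at most $w+l$ classes is split by these sets into at most $2^{q_S}$ \emph{cells}, one per subset of set variables; I would define the \emph{signature} of the assignment as the vector recording, for every class and every cell, the number of assigned vertices capped at a threshold $\tau=\Theta(q_V)$. The lemma asserts that two assignments with equal signature satisfy exactly the same first‑order formulas with at most $q_V$ quantifiers over the enriched structure. I would prove it by an Ehrenfeucht--Fra{\"\i}ss{\'e} game in which Duplicator always answers inside the same cell, maintaining the invariant that in each cell the number of still‑unused vertices agrees on both sides or is at least $\tau$ minus the number of rounds already played; since $\tau\ge q_V$ no cell is exhausted prematurely, and atoms are preserved because they depend only on cells and equalities. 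Pinning down $\tau$ and verifying this invariant is, I expect, the one genuinely delicate step.

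Granting the lemma, the algorithm recurses on the (prenex) structure of $\phi$ while maintaining a single \emph{state}: for each class, a capped count for every cell determined by the set variables quantified so far, plus at most $q_V$ singleton cells holding the already‑pebbled first‑order variables. At a set quantifier $QS_i$ I would branch over all ways of splitting each cell into its $S_i$‑part and non‑$S_i$‑part, discarding splits that are not \emph{realizable} against the true class sizes (a capped value $\tau$ meaning ``$\ge\tau$''); at a first‑order quantifier $Qx$ I would branch over the at most $(w+l)2^{q_S}+q_V$ cells, move one vertex of the chosen cell into a fresh singleton cell, and recurse; at the quantifier‑free matrix I would evaluate the atoms from $H$ and the state. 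A choice $Q=\exists$ succeeds if some branch does, $Q=\forall$ if all do. Correctness of the set quantifiers is exactly the lemma—every realizable signature is witnessed by an actual set, and within a signature the particular witness is irrelevant—and the first‑order steps are immediate.

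To bound the running time I would note that there are at most $(\tau+1)^{(w+l)2^{q_S}}$ states, that each quantifier multiplies the work by its number of branches—bounded by the number of states for a set quantifier and by the number of cells for a first‑order quantifier—and that there are at most $q_S+q_V$ quantifiers; collecting these factors with $\tau=\Theta(q_V)$, together with the polynomial‑time preprocessing, yields a running time of the stated form $2^{O(2^{q_S}(w+l)\,q_S^2\,q_V\log q_V)}\cdot|\phi|$. The main obstacle remains the indistinguishability lemma and the correct choice of threshold; the realizability test for set‑quantifier signatures is a secondary subtlety, and the remaining steps are routine bookkeeping.
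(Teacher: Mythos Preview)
The paper does not contain a proof of this theorem. It is stated purely as a citation of \cite[Theorem~2]{lam10} and quoted as background; no argument for it appears anywhere in the present paper. There is therefore nothing here to compare your proposal against.

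For what it is worth, your outline is essentially Lampis' own argument: vertices in the same type class are interchangeable under $MS_1$; each set variable refines every class into at most two cells, giving at most $(w+l)2^{q_S}$ cells overall; within a cell only the cardinality capped at a threshold of order $q_V$ matters for the remaining first-order quantifiers; and one then enumerates all capped signatures. Your formulation via an Ehrenfeucht--Fra\"{\i}ss\'e game is a correct and standard way to justify the indistinguishability step, and your running-time accounting recovers the stated bound. So the proposal is sound, but the comparison requested is vacuous because the paper defers the proof entirely to \cite{lam10}.
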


An important property of every graph parameter is its own computability, i.e. the time it takes to calculate it for a given input graph.
The greatest drawback of clique-width was the fact that it was actually NP-hard to compute it exactly, even when its value was bounded.
Rank-width, tree-width and vertex cover are all computable in FPT time when their respective values are bounded.
The nice structure of neighborhood diversity in fact allows us to go one step further and compute it in \emph{polynomial time} even if its value is large.

\begin{theorem}[{\cite[Theorem 5]{lam10}}]
There exists an algorithm which runs in polynomial time and given a graph $G(V,E)$ finds a
minimum partition of $V$ into neighborhood types.
\end{theorem}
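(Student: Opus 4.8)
The plan is to exploit the fact, already noted above, that \emph{having the same type} is an equivalence relation on $V$. This means the type classes of $G$ are canonically determined, and I claim the coarsest valid partition — the one achieving neighborhood diversity, i.e. with the fewest parts — is exactly the partition of $V$ into these equivalence classes. First I would prove this optimality: in any partition witnessing that $G$ has neighborhood diversity at most $w$, each part consists of pairwise same-type vertices and therefore lies entirely inside a single equivalence class; hence every witnessing partition \emph{refines} the equivalence-class partition and so has at least as many parts as there are equivalence classes. Conversely, merging any two distinct equivalence classes into one part would place vertices of different types together, which is not allowed; thus the equivalence-class partition cannot be coarsened, and it is the unique minimum partition into neighborhood types.

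It then suffices to compute the equivalence classes of the same-type relation. For each unordered pair $\{v,v'\}$ of vertices I would test whether $N(v)\setminus\{v'\} = N(v')\setminus\{v\}$; using an adjacency matrix (or sorted adjacency lists) this test runs in $O(|V|)$ time, so all $\binom{|V|}{2}$ pairs are handled in $O(|V|^3)$ total time. I then form an auxiliary graph on vertex set $V$ whose edges are precisely the same-type pairs and extract its connected components (equivalently, run a union--find over the successful pairs); by transitivity of the relation, these components are exactly the type classes. The algorithm outputs this partition, and its cardinality is the neighborhood diversity of $G$. The whole procedure is clearly polynomial, indeed it runs in time polynomial in $|V|$ alone, with no dependence on the value of the parameter — which is the point of the theorem.

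I do not anticipate a genuine obstacle; the two places needing a little care are (i) the case distinction in the same-type test — the ``$\setminus\{v'\}$'' and ``$\setminus\{v\}$'' terms only matter when $v$ and $v'$ are adjacent, in which case the condition reduces to the remaining neighborhoods coinciding, while for non-adjacent $v,v'$ it is simply $N(v)=N(v')$ — and (ii) the argument that the equivalence-class partition is truly minimal and not merely valid, which is the refinement argument sketched above. As a remark, one can avoid all pairwise comparisons by assigning to each vertex a canonical sort key derived from its neighborhood (treating a vertex as adjacent or not to itself according to whether its class is a clique) and radix-sorting the vertices, which lowers the running time; but this optimization is not required for the stated polynomial bound.
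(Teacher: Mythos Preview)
Your argument is correct: the same-type relation is an equivalence, the equivalence-class partition is a valid witnessing partition, every other witnessing partition refines it, and the classes can be computed by pairwise neighborhood comparison in $O(|V|^3)$ time (or faster via the sorting trick you mention). Note, however, that the present paper does not give its own proof of this theorem --- it is quoted from Lampis~\cite{lam10} as a preliminary result --- so there is no in-paper proof to compare against; your proof is the standard one and matches what Lampis does.
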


\section{Algorithmic Results}
We will now present three new FPT algorithms parameterized by neighborhood diversity, illustrating various methods of exploiting the structure of such graphs to design efficient algorithms.
As mentioned earlier, it is generally not possible to translate algorithms from vertex cover to neighborhood diversity, since vertex-cover-parameterized algorithms strongly rely on exhaustively processing the cover vertices and then dealing with the remaining vertices, which are all independent.
On the other hand, here we have a relatively simple structure for all the vertices, but these are (often highly) connected to each other and there is no finite set of vertices to exhaustively process.

\subsection{Graph Motif}
\begin{definition}[Graph Motif]
\\
\noindent Input: A vertex-colored\footnote{This coloring need not be proper -- neighboring vertices may have the same color.} undirected graph $G$ and a multiset $M$ of colors.

\noindent Question: Does there exist a connected subgraph $B$ of $G$ such that the multiset of colors $col(B)$ occurring in $B$ is identical to $M$?
\end{definition}

\vspace{-0.2cm}

The Graph Motif problem has been introduced in \cite{lfs06} and arises naturally in bioinformatics, especially in
the context of metabolic network analysis.
Its complexity has been studied in \cite{ffvh11} and \cite{ab10}, the latter proving that the problem remains NP-hard even
on graphs of path-width 2.
Our first new result is an FPT algorithm for Graph Motif on graphs of bounded neighborhood diversity.

\begin{theorem}
The Graph Motif problem can be solved in time $O(2^{k}\!\cdot~\!\!\sqrt{|V|}|E|)$
on graphs of neighborhood diversity at most $k$.
\end{theorem}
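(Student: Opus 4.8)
The plan is to exploit the type graph $H$ of $G$, which has at most $k$ vertices. First I would observe that any connected subgraph $B$ witnessing the motif is determined, up to isomorphism of its intersection with each type class, by two pieces of data: the set $S \subseteq V(H)$ of type classes it touches, and, for each type class $T \in S$, the \emph{number} $n_T = |V(B) \cap T|$ of vertices of $B$ inside $T$. The reason is that all vertices in a type class are interchangeable — they have identical neighborhoods outside the class and form a clique or independent set inside it — so once we fix how many vertices of $B$ lie in each class, every choice of that many concrete vertices yields an isomorphic (and equally valid) candidate. Hence I would \textbf{guess the set} $S$ by branching over all $2^k$ subsets of $V(H)$; this is where the $2^k$ factor comes from.

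Next, for a fixed guess $S$, I would first check \emph{connectivity}: $B$ is connected if and only if the subgraph of $H$ induced on $S$ is connected, \emph{except} that a singleton class $T=\{$one vertex$\}$ is fine, a class that is independent in $G$ may contribute at most one vertex unless $H[S]$ connects it, etc. More precisely, if $H[S]$ is connected then any positive choice of $n_T \ge 1$ for each $T \in S$ gives a connected $B$ (every vertex in $T$ is adjacent to every vertex in any $H$-neighboring class $T'\in S$); if $H[S]$ is disconnected, no valid $B$ exists with exactly the classes $S$. So I discard disconnected guesses immediately.

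It then remains, for a connected $S$, to decide whether we can pick $n_T$ with $1 \le n_T \le |T|$ for each $T\in S$ so that the resulting color multiset equals $M$. Here I would set up a \textbf{bipartite matching / flow} instance: on one side the $|M|$ "color slots" of the motif (grouped by color, with the required multiplicities), on the other side the vertices of $G$ lying in classes of $S$, with an edge between a slot of color $c$ and a vertex $v$ iff $v$ has color $c$; we seek a perfect matching saturating all $|M|$ slots subject to the side constraint that at least one vertex is used from each class in $S$. The "at least one per class" constraint can be enforced by a small gadget or by a separate sweep. Checking $|V(B)| = |M|$ is automatic from saturating the slots. A maximum matching in a bipartite graph with $|V|$ vertices and $|E|$ edges is found in $O(\sqrt{|V|}\,|E|)$ by Hopcroft–Karp, giving the claimed $O(2^k \cdot \sqrt{|V|}\,|E|)$ total.

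The main obstacle I expect is handling the interaction between the "exactly $M$ colors" matching condition and the "touch exactly the classes in $S$, each at least once" constraint simultaneously — a naive matching ignores the lower bounds $n_T \ge 1$. I would resolve this by first reserving one vertex of each class $T \in S$ greedily to cover some color in $M$ (branching is not needed if done carefully, since within a class all vertices of a given color are equivalent and one can argue a matching exists iff a matching exists after an optimal such reservation), and only then running Hopcroft–Karp on the residual instance; a short exchange argument shows this greedy reservation loses nothing.
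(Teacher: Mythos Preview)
Your approach is essentially the paper's: branch over the $2^k$ subsets $S$ of the type graph, discard disconnected ones (with the singleton independent-class caveat you note), and reduce the remaining feasibility question to a bipartite matching. The difference lies only in how the matching is organised.

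The paper does not try to enforce the per-class lower bound inside one big matching. Instead it observes that the feasibility condition factors cleanly into two \emph{independent} checks:
\begin{enumerate}
\item a \emph{skeleton} exists, i.e.\ one can pick one vertex from each class in $S$ so that the resulting multiset of colours is a sub-multiset of $M$ --- this is a small bipartite matching between the classes in $S$ and the colour slots of $M$, and is exactly your ``reservation'' subproblem;
\item $M$ is contained (as a multiset) in the multiset of all colours appearing in the classes of $S$ --- a trivial count, no matching needed.
\end{enumerate}
If both hold, the skeleton can always be greedily extended to a full motif: for each colour $c$, the skeleton consumes at most as many $c$-vertices as $M$ demands, and (ii) guarantees enough remain. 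Connectivity is preserved because every added vertex is adjacent to the skeleton vertex of some $H$-neighbouring class.

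Your formulation with a single Hopcroft--Karp call plus lower bounds is equivalent, but the step you flag as the ``main obstacle'' is exactly where your write-up goes soft: a \emph{greedy} reservation can fail even when a valid one exists (e.g.\ $T_1$ has only colour $a$, $T_2$ has colours $\{a,b\}$, $M=\{a,b\}$; processing $T_2$ first and greedily taking $a$ blocks $T_1$), so ``greedy plus a short exchange argument'' really means ``run a matching algorithm on the reservation problem'' --- which is precisely the paper's skeleton matching. And once any valid reservation is fixed, the residual check reduces to $m_c \le b_c$ for every colour $c$, independent of which reservation was chosen; this is just condition (ii). So the decomposition you were reaching for is exactly the paper's, stated more directly.
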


\begin{proof}
The key observation lies in the fact that while there may be arbitrarily many possibilities for $B$ even on graphs of low neighborhood diversity, the highly connected structure enforced by our parameter allows us to work with a coarse representation of $B$ instead.

Let $B'$ denote the subgraph of $H$ (the type graph of $G$) induced by the types which contain at least a single vertex of $B$. 
Then the connectivity condition on $B$ immediately implies that $B'$ also needs to be connected.
We will proceed in two steps: First, we will consider all possibilities of $B'$, and for each $B'$ we will check whether there exists some admissible $B$.

For the first step, it suffices to realize that the number of vertices in $H$ is bounded by $k$ and so there are at most $2^k$ possible selections of $B'$.
For each considered $B'$, we move on to step two unless $B'$ is not connected (or if $B'$ contains only a single vertex of an independent set type if $|M|\geq 2$); this may be checked in advance for each $B'$.

We say $S$ is a skeleton of $B'$ if $S$ is an induced subgraph of $G$ such that
\begin{enumerate}
\item $col(S)\subseteq M$.
\item There exists an isomorphism $\phi$ between $S$ and $B'$.
\item $\forall v\in S: v\in \phi(v)$. (i.e. the isomorphism respects types)
\end{enumerate}
Before discussing how to deal with finding some or all possible skeletons, let us consider what happens once we have found a skeleton $S$.
Any vertex with a type in $B'$ is adjacent to some vertex in $S$, and in this way we may add any colors missing from $M$ to $S$ as long as they occur somewhere in $B'$.
Specifically, a $B$ inducing the selected $B'$ exists iff there exists at least one skeleton of $B'$ and $M$ is a subset of the multiset of colors occurring in $B'$.

Finally, what remains is to check whether there exists any skeleton of $B'$.
This final subproblem may be solved by simply finding a maximum matching between all colors in the multiset $M$
and all types in $B'$ -- with edges between colors and types containing that color.
Such a bipartite graph may be constructed in time $|E|$, has at most $|V|$ vertices and allows a maximum matching to be found in time $O(\sqrt{|V|}|E|)$. A skeleton exists iff the resulting maximum matching leaves no type unmatched.

So, the algorithm proceeds as follows.
First, it runs through all at most $2^k$ possible connected selections of $B'\subseteq H$.
Next, for each fixed $B'$ we check whether $M\subseteq col(B')$ -- if not, we move on to the next $B'$.
Otherwise, we check whether there exists a skeleton of $B'$ by the maximum matching subroutine.
If $M\subseteq col(B')$ and some $S$ is found, we can easily construct an admissible solution, and otherwise we know that no solution exists for the selected $B'$.
\qed
\end{proof}

\subsection{p-Vertex-Disjoint Paths}
There are two versions of the Vertex Disjoint Paths problem: either the number of paths is bounded by a constant, or it is part of the input and not bounded.
Both versions arise naturally in many fields.
The first version is much easier to solve and FPT algorithms are known even on general graphs \cite{rs95}.
The variant considered from now on is the second, more difficult one.

\begin{definition}[p-Vertex-Disjoint Paths]
\\
\noindent Input: An undirected graph $G=(V,E)$ and a set $P=\{s_i,t_i\}$, $s_i,t_i\in V$ of start and end vertices.

\noindent Question: Do there exist $|P|$ mutually vertex-disjoint paths $p_i$ between $s_i$ and $t_i$?
\end{definition}

Gurski and Wanke \cite{gw06} have shown that p-Vertex-Disjoint Paths is NP-hard on graphs of bounded rank-width and clique-width.
The problem is also known to be NP-hard on various other graph classes, such as planar graphs.
On the other hand, the problem admits an FPT algorithm for graphs of bounded tree-width.
We present an FPT algorithm for p-Vertex-Disjoint Paths on graphs of bounded neighborhood diversity, which allows the problem to be solved even on dense graphs of unbounded tree-width.
However, we will first need a few auxiliary results. 

\begin{definition}
We say that a p-Vertex-Disjoint Paths solution is \emph{simple} if each path only contains at most a single non-endpoint vertex of every type.
\end{definition}

\begin{lemma}
\label{lempath}
Any instance of p-Vertex-Disjoint Paths admits a solution iff it also admits a solution which is simple.
\end{lemma}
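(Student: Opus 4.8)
The plan is to start from an arbitrary solution $\{p_i\}$ and repeatedly apply a local shortcutting operation that strictly decreases some progress measure, until a simple solution is reached. The key structural fact is that within a single type class $T$, all vertices have identical neighborhoods outside $T$, and $T$ itself is either a clique or an independent set; so any vertex of $T$ can ``stand in'' for any other vertex of $T$ as an internal vertex of a path, provided we are careful about the independent-set case and about endpoints.

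First I would fix notation: say a path $p_i$ is \emph{bad at type $T$} if it contains two or more non-endpoint vertices of $T$. Suppose $p_i = (s_i = v_0, v_1, \dots, v_\ell = t_i)$ is bad at $T$, and let $a < b$ be two indices with $v_a, v_b \in T$ both internal (so $0 < a < b < \ell$). I would first handle the case where $T$ is a clique: then $v_a v_b \in E$, so we can simply replace the segment $v_a, v_{a+1}, \dots, v_b$ by the edge $v_a v_b$, deleting the internal detour entirely. This yields a shorter path between $s_i$ and $t_i$ using a subset of the old vertices, so the paths remain pairwise vertex-disjoint. If instead $T$ is an independent set: then $v_{a-1}$ and $v_{a+1}$ are both neighbors of $v_a \in T$, hence (by the common-neighborhood property of $T$, and since $v_{a-1}, v_{a+1} \notin T$ as $T$ is independent) they are also neighbors of $v_b$. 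So replace $v_a$ by $v_b$ in position $a$ and delete the segment from position $a+1$ through $b$; concretely the new path is $v_0, \dots, v_{a-1}, v_b, v_{b+1}, \dots, v_\ell$, which is again a valid $s_i$--$t_i$ path on a strict subset of the old vertex set. In both cases the number of internal vertices of $p_i$ in $T$ drops by at least one, and no new vertices are introduced anywhere, so disjointness with the other $p_j$ is preserved and no $p_j$ becomes bad at any type it was not already bad at.

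Next I would argue termination: take as progress measure the total number of (path, internal-vertex) incidences, i.e. $\sum_i (|V(p_i)| - 2)$, or equivalently just the sum of path lengths. Each application of the operation above strictly decreases this nonnegative integer, so after finitely many steps no path is bad at any type, which is exactly the definition of a simple solution. The converse direction of the iff is trivial, since a simple solution is in particular a solution.

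The main obstacle I anticipate is the bookkeeping around endpoints: the definition speaks of ``non-endpoint vertices,'' and when $T$ is an independent set the substitution $v_a \mapsto v_b$ must not accidentally collide with an endpoint $s_j$ or $t_j$ of another path, nor with $s_i, t_i$ themselves. This is handled by observing that $v_b$ already lies on $p_i$ and $v_b \ne s_i, t_i$ since $v_b$ was assumed internal, and $v_b$ lies on no other $p_j$ by the original disjointness; so no collision is created. A second minor subtlety is that if $s_i$ or $t_i$ happens to lie in $T$, it simply does not count toward badness and plays no role in the argument, so nothing special is needed. Once these points are checked, the induction on the progress measure closes the proof.
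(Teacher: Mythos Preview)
Your approach is the same shortcutting idea as the paper's, and your handling of endpoints and disjointness is in fact more careful than the paper's terse one-line argument. There is, however, a small but genuine gap in your clique case: if $b=a+1$, i.e.\ the two internal $T$-vertices happen to be consecutive on the path, then ``replacing the segment $v_a,\dots,v_b$ by the edge $v_av_b$'' does nothing at all. The path is unchanged, it is not shorter, the number of internal $T$-vertices does not drop, and your progress measure does not decrease. Since nothing forbids two internal vertices of a clique type from being adjacent on the path, your termination argument stalls in this configuration.

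The fix is immediate, and it also removes the need for a case split: your independent-set shortcut works verbatim in the clique case. The step ``$v_{a-1}$ is adjacent to $v_b$'' only uses that $v_a,v_b$ have the same type and that $v_{a-1}\neq v_b$ (which follows from $a-1<a<b$); it does not actually need $v_{a-1}\notin T$. So you may always pass to $v_0,\dots,v_{a-1},v_b,v_{b+1},\dots,v_\ell$, which deletes at least $v_a$ and hence strictly decreases the total length. This uniform version is precisely what the paper does, phrased symmetrically: it reroutes from the first $T$-occurrence $x$ directly to the successor $z$ of the last $T$-occurrence $y$, using that $x$ and $y$ share their neighbourhoods, and thereby drops $y$.
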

\begin{proof}
Assume that in our solution, there is a path of length at least four which contains several vertices of the same type.
Consider any type which occurs several times on the path, $x$ and $y$ being the first and last vertices of the same type occurring on the path and $z$ being the vertex after $y$ on the path.
Then the path may be rerouted from $x$ to $z$.
\qed
\end{proof}
\vspace{0.4cm}
We remark that the previous lemma is not true for edge-disjoint paths.
Next, we will make use of the deep result by Lenstra that integer linear programming is in fact FPT in the number of variables (\cite{len83}).
This provides a powerful tool for the design of FPT algorithms, which has unfortunately not had many interesting applications until recently (as noted in Niedermeier's monograph \cite{n06}).
The runtime has been later improved by Kannan \cite{kan87} and Frank and Tardos \cite{ft87}.
This approach was successfully used to design a number of FPT algorithms for graphs of bounded vertex cover \cite{flm08}.
The theorem we will use is:

\begin{theorem}[\cite{ft87,kan87,len83}]
\label{thmilp}
p-Variable Integer Linear Programming Feasiblity with an input of size $n$
can be solved in $O(q^{2.5q+o(q)}\cdot n)$ time and $n^{O(1)}$ space.
\end{theorem}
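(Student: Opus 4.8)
This is the classical Lenstra--Kannan result, so I would present the proof geometrically, sketching the argument behind the citation. Recall that \emph{p-Variable Integer Linear Programming Feasibility} asks, given $A\in\mathbb{Z}^{m\times q}$ and $b\in\mathbb{Z}^{m}$, whether the polyhedron $P=\{x\in\mathbb{R}^{q} : Ax\le b\}$ contains a point of $\mathbb{Z}^{q}$. First I would reduce to a clean core problem. Using linear programming one detects in polynomial time whether $P$ is empty (then reject) or unbounded; in the unbounded case standard bounds on vertex coordinates let one either conclude directly or clip $P$ to a bounded polytope without changing feasibility. Likewise, if $P$ is not full-dimensional one replaces it by its affine hull, applies a unimodular change of basis, and recurses in dimension $q-1$. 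So it suffices to decide $P\cap\mathbb{Z}^{q}\ne\emptyset$ for a bounded, full-dimensional rational polytope $P$; more generally, I would phrase the recursive core as deciding $P\cap\Lambda\ne\emptyset$ for a lattice $\Lambda$ given by a basis.

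The next step is \emph{rounding}. I would compute in polynomial time a linear transformation $\tau$ so that the transformed body $\tau(P)$ is sandwiched between two concentric Euclidean balls of radii $r\le R$ with $R/r\le g(q)$ for some function $g$ depending only on $q$ --- this is, up to a constant factor, the L\"owner--John ellipsoid of $P$, obtainable directly from the explicit facet description of $P$ (or via the shallow-cut ellipsoid method). We push $\tau$ onto the lattice side and henceforth work with the round body and the lattice $\Lambda'=\tau(\Lambda)$.

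Then comes \emph{lattice basis reduction followed by a dichotomy}. Run a basis-reduction algorithm on $\Lambda'$ --- LLL in Lenstra's original version, a stronger reduction in Kannan's refinement --- to obtain a nearly orthogonal basis $b_1,\dots,b_q$, all in polynomial time and polynomial space. Now either the inscribed ball of radius $r$ is large enough that $\tau(P)$ is guaranteed to contain a point of $\Lambda'$, because a reduced basis bounds the covering radius of $\Lambda'$ and hence the center of the ball lies within a controlled distance of the lattice --- in which case we accept; or else $r$ is small, so $P$ is ``flat'': it meets only $c(q)$ consecutive lattice hyperplanes in the direction supplied by the shortest vector of the dual reduced basis, with $c(q)=O(g(q))$. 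We recurse on each of these at most $c(q)$ hyperplanes, each being an integer feasibility instance of dimension $q-1$ after a unimodular coordinate change.

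Finally I would account for the complexity. The recursion has depth at most $q$ and branching factor at most $c(q)$, so it has $c(q)^{q}=q^{O(q)}$ leaf subproblems; the per-node work --- linear programming, ellipsoid approximation, basis reduction, coordinate changes --- is polynomial in the bit-size of the current instance, and by first applying the Frank--Tardos preprocessing \cite{ft87} one may assume the coefficients stay of size polynomial in $q$, which is what reduces the dependence on the input to the single linear factor $n$. A depth-first traversal of the recursion, together with the polynomial-space guarantees of the subroutines, gives the $n^{O(1)}$ space bound. The main obstacle --- and essentially the entire content of the improvement from the naive $q^{O(q^{2})}$ to the stated $q^{2.5q+o(q)}$ --- is controlling the flatness constant $c(q)$: one needs a near-optimal bound on the lattice width of a lattice-point-free convex body (the flatness theorem) together with a basis reduction strong enough, yet still polynomial-time, to exhibit a thin direction. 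This is exactly where the Kannan and Frank--Tardos refinements \cite{kan87,ft87} come in, and where the $2.5$ in the exponent is produced.
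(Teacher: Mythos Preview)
The paper does not prove this theorem at all: it is quoted as a black-box result from the literature (Lenstra's original FPT algorithm, Kannan's refinement of the running time, and the Frank--Tardos coefficient-reduction step), and is then invoked as a subroutine in the subsequent algorithms. There is therefore no ``paper's own proof'' to compare your proposal against.

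That said, your sketch is a faithful outline of the classical argument and would be appropriate as an expository aside. A couple of minor points if you want to tighten it: the precise exponent $2.5q+o(q)$ arises from Kannan's analysis combining his shortest-vector enumeration with the flatness bound, and the role of Frank--Tardos is specifically the simultaneous Diophantine approximation that shrinks the coefficients so the per-node work is genuinely $O(n)$ rather than merely polynomial in $n$; your text slightly blurs which reference contributes which ingredient. Also, the ``unbounded $P$'' case is not quite ``conclude directly'': one intersects with a box whose size is derived from bounds on the bit-length of basic feasible solutions. None of this affects correctness, but since the paper only cites the theorem, you might simply cite it too rather than reprove it.
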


With these two results in hand, we are finally ready to solve the problem.

\begin{theorem}
The p-Vertex-Disjoint Paths problem may be solved in 
time $O(q^{2.5q+o(q)}\cdot n)$, where $q=k^2\cdot 2^k$,
on graphs of neighborhood diversity at most $k$.
\end{theorem}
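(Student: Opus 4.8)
The plan is to reduce the problem to an integer linear program with $q = k^2\cdot 2^k$ variables and apply Theorem \ref{thmilp}. By Lemma \ref{lempath} it suffices to look for a \emph{simple} solution, in which every path contains at most one non-endpoint vertex of each type. First I would compute a minimum type partition of $G$ (in polynomial time), let $H$ be the type graph, and for each type $T$ record $n_T$, the number of vertices of type $T$, and $e_T$, the number of demand-pair endpoints (terminals) lying in $T$; degenerate cases such as two demands sharing a terminal or $s_i=t_i$ are detected and handled directly, so we may assume all terminals are distinct vertices. In a simple solution each path $p_i$ traces a walk in $H$ from the type $a$ of $s_i$ to the type $b$ of $t_i$ whose internal vertices lie in pairwise distinct types; for the purpose of counting vertex usage $p_i$ is therefore fully described by a \emph{route pattern} $\sigma=(a,b,S)$, where $S$ is the set of types containing an internal vertex of $p_i$. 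There are at most $k^2\cdot 2^k$ route patterns. Call $\sigma$ \emph{realizable} if $H$, together with the information of which types are cliques, admits a walk from $a$ to $b$ hitting each type of $S\setminus\{a,b\}$ exactly once and $a$ (resp.\ $b$) once more iff $a\in S$ (resp.\ $b\in S$); realizability of each pattern can be decided by a standard subset dynamic program in time $2^k\cdot k^{O(1)}$, which I would run as preprocessing.

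Next I would build the ILP. For every realizable route pattern $\sigma=(a,b,S)$ there is a variable $x_\sigma\ge 0$, intended as the number of demand pairs routed along a path of pattern $\sigma$. The constraints are: for every unordered type pair $\{a,b\}$, $\sum_\sigma x_\sigma = d_{\{a,b\}}$, the sum over realizable patterns with endpoint types $\{a,b\}$ and $d_{\{a,b\}}$ the number of demand pairs with those endpoint types (every demand is routed); and for every type $T$, $\sum_{\sigma:\,T\in S}x_\sigma \le n_T-e_T$ (the internal vertices of the paths never exhaust the non-terminal vertices of $T$). This is a $q$-variable ILP with $O(k^2)$ constraints and coefficients at most $n$, hence of size polynomial in $q$ and $\log n$, so Theorem \ref{thmilp} decides its feasibility in time $O(q^{2.5q+o(q)}\cdot n)$, which dominates the preprocessing.

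It then remains to show the instance has a solution iff this ILP is feasible. The forward direction is routine: from a simple solution (Lemma \ref{lempath}) let $x_\sigma$ count its paths of pattern $\sigma$; the demand constraints hold since every pair is routed, the capacity constraints since the paths are vertex-disjoint and use at most one internal vertex per type, and each used pattern is realizable since an actual path witnesses it. For the converse, from a feasible $x$ I would assign, within each endpoint-type group, the prescribed multiplicities of patterns to the demand pairs arbitrarily, and realize a pattern $\sigma=(a,b,S)$ by following the walk guaranteed by realizability, picking at each internal step an as-yet-unused non-terminal vertex of the current type. The capacity constraints ensure such a vertex always exists; consecutive types on the walk are adjacent in $H$ and hence completely joined in $G$, so any choice of distinct representatives yields a genuine path and paths with disjoint representatives are vertex-disjoint. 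I expect the main obstacle to be precisely this converse: pinning down the notion of realizable pattern so that the greedy realization provably never fails -- in particular correctly treating clique types, which let a path revisit the type of one of its own endpoints, and the case where $s_i$ and $t_i$ share a type -- and checking that the ILP as written cannot certify a path system that secretly reuses a vertex.
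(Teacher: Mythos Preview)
Your approach is essentially the paper's: both reduce, via Lemma~\ref{lempath}, to an ILP with one variable per (endpoint-type pair, route-through-$H$) category---at most $k^2\cdot 2^k$ variables---impose demand-matching equalities and per-type capacity inequalities, and invoke Theorem~\ref{thmilp}. The paper phrases the capacity constraint as ``total vertices of type $T$ used (endpoints counted with multiplicity) $\le n_T$'' rather than your equivalent $\sum_{\sigma:T\in S}x_\sigma\le n_T-e_T$, and is less explicit than you about realizability and the clique/same-type-endpoint corner cases, but the argument is the same.
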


\begin{proof}
Lemma \ref{lempath} tells us that we may in fact search for a solution where every path belongs to up to $k^2\cdot 2^k$ categories: $k^2$ possible start and end points, and less than $2^k$ possible routes (without the start and end points) in the type graph $H$.
For each route in $H$ we will discard those which are not valid paths starting and ending adjacent to the appropriate endpoints.
The problem is that there may still be many paths in each category, and we cannot exhaustively consider all possible assignments of paths into categories.

So, we construct an integer linear programming formulation capturing our problem.
Let $n_{a,b,i}$, $1\leq a,b\leq k$, $1\leq i\leq 2^k$ be integer variables describing the number of paths of each category.
We use constraints to ensure that:
\begin{enumerate}
\item The number of paths on the input which begin and end in a given type are equal to the sum of paths returned by the ILP program (e.g. $const_{a,b}=\sum_{1\leq i\leq 2^k} n_{a,b,i}$).
\item The number of paths crossing through every type in $H$ needs to be less or equal than the number of vertices of that type.
This is also straightforward, as the sum of certain predetermined variables needs to be less than a constant, with those paths ending as well as beginning in that type multiplied by two.
\end{enumerate}

If a solution exists, then the paths can be divided into categories as described above and the algorithm will return a satisfying evaluation.
On the other hand, if the algorithm returns a satisfying evaluation, we can easily transfer it into a solution by a greedy algorithm (all vertices in of any given type are mutually equivalent and so their choice does not matter as long as the correct route in $H$ is followed).
\qed
\end{proof}

\subsection{Precoloring Extension}
Precoloring Extension is a natural problem where we are given a partial proper coloring of a graph $G$ and the task is to extend it into a proper coloring of $G$ with $r$ colors.
The problem is W[1]-hard on graphs of bounded tree-width \cite{fetal11} and FPT when the vertex cover is bounded \cite{fgk10}.

As before, the algorithm of \cite{fgk10} cannot be directly applied to graphs of bounded neighborhood diversity.
Instead, we will once again construct an integer linear programming formulation.

\begin{theorem}
The Precoloring Extension problem can be solved in time
$O(q^{2.5q+o(q)}\cdot n)$, where $q=2^{2k}$,
on graphs of neighborhood diversity at most $k$.
\end{theorem}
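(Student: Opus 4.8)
The plan is to encode the problem as an integer linear program with at most $2^{2k}$ variables and solve it via Theorem~\ref{thmilp}. First I would compute in polynomial time a minimum partition of $V$ into type classes $T_1,\dots,T_k$ (by \cite[Theorem 5]{lam10}) together with the type graph $H$; recall that each $T_i$ is a clique or an independent set, and that two type classes are either completely joined or completely non-adjacent in $G$. Next I would analyze the given (proper) partial coloring: for each color $c$ occurring on a precolored vertex let its \emph{seed pattern} $A(c)\subseteq\{1,\dots,k\}$ be the set of indices $i$ such that some vertex of $T_i$ carries color $c$; properness makes each $A(c)$ automatically an independent set of $H$. Let $p_A$ be the number of precolored colors with seed pattern exactly $A$, and let $u_i$ be the number of uncolored vertices of $T_i$.

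The structural point is that, because the vertices inside a single type class are interchangeable, an extension is fully described by declaring for each color $c$ the set $U(c)\subseteq\{1,\dots,k\}$ of classes on which $c$ is actually used: this $U(c)$ must be an independent set of $H$ and must contain $A(c)$. Accordingly I would introduce, for every pair $A\subseteq S\subseteq\{1,\dots,k\}$ with $S$ independent in $H$, an integer variable $x_{A,S}\ge 0$ counting the colors with seed pattern $A$ and final usage set $S$ (with $A=\emptyset$ accounting for the colors not touched by the precoloring); there are at most $2^{2k}$ such variables. The constraints are: (i) $\sum_S x_{A,S}=p_A$ for each $A\neq\emptyset$, which forces every precolored color to be realized and prevents two distinct precolored colors from being merged; (ii) $\sum_{A,\,S\ni i}x_{A,S}=|T_i|$ for each clique class $T_i$, since a clique needs exactly one color per vertex; (iii) $\sum_{A,\,S\ni i}x_{A,S}\ge 1$ for each independent class $T_i$ with $u_i>0$, since a single color available on $T_i$ suffices to color all its uncolored vertices at once; and (iv) $\sum_{A,S}x_{A,S}\le r$, bounding the total number of colors used.

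It then remains to prove equivalence of ILP feasibility and solvability of the instance. In one direction, an actual proper extension gives the sets $U(c)$, which satisfy all the required properties, so the corresponding counts form a feasible ILP solution. For the converse, given a feasible $x_{A,S}$ I would build a coloring greedily: retain the precolored colors (there are $p_A$ with seed $A$, matched to the variables by (i)); then, in each clique class, pair its uncolored vertices one-to-one with colors whose declared set contains that class (the numbers match by (ii)); and in each independent class, give all uncolored vertices one color declared to use that class (available by (iii)). Since every final set $S$ is independent in $H$, no two adjacent classes ever share a color, so the result is a proper coloring with at most $r$ colors by (iv). Building the ILP takes polynomial time and it has $q\le 2^{2k}$ variables, so Theorem~\ref{thmilp} finishes in $O(q^{2.5q+o(q)}\cdot n)$ time.

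I expect the main obstacle to be pinning down the constraints exactly, not the variable count. One has to keep clique classes (equality, one color per vertex) and independent classes (one color covers the whole class) apart, and above all to include constraint~(i): without it the program could meet all coverage requirements with fewer colors than the precoloring genuinely forces and would declare infeasible instances feasible. A secondary point is to verify that the greedy realization never gets stuck -- in particular, a color nominally declared on an independent class that does not actually need it can simply be dropped from that class, which only shrinks its final set and so preserves independence in $H$.
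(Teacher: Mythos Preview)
Your approach is essentially the paper's: both set up an ILP whose variables are indexed by pairs (precoloring pattern, final usage pattern) and invoke Lenstra/Kannan/Frank--Tardos on the resulting $\le 2^{2k}$ variables, with your constraint~(i) matching the paper's ``category $=$ sum of subcategories'' and your~(ii) matching its ``vertices of a type $=$ sum of subcategories hitting it''. The only cosmetic difference is that the paper first collapses every independent-set type to a single vertex (or propagates an existing precolor across it), so that \emph{every} type then obeys the equality constraint~(ii); this preprocessing replaces your separate $\ge 1$ constraint~(iii) and makes the budget constraint~(iv) implicit in~(i) for the empty seed pattern.
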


\begin{proof}
Recall that every type class either form an independent set or clique.
It is a simple observation that if any vertex in an independent set type is precolored, we may always use that color for all the other vertices of the same type.
On the other hand, if no vertex in an independent set type is precolored, any color which is used for one vertex of that type can be used for all the others.
So it suffices to only consider inputs where independent set types are fully precolored or only contain a single vertex.

We will use similar categories as those in the previous algorithm, however this time these will have to be finer.
A color category will be the set of colors which appear precolored in the same types, and these are fixed by the input.
For each color category, we will then distinguish color subcategories:
Two colors are in the same subcategory if they are in the same category and appear (colored or precolored) in the same types.
Obviously, if a proper extension of a precoloring exists, then all colors may be divided into subcategories accordingly.
The number of color subcategories is at most $2^{2k}$, and since the coloring needs to be proper we only consider admissible subcategories (omit those containing any adjacent types).

So, we use variables to denote the number of colors in each category subcategory (for instance $n_{a,b}$ for category $a$ and subcategory $b$).
The key observation used to build our constraints is that colors in the same subcategories not only appear in the same types, they also color exactly one vertex in each such type.
The constraints are:
\begin{enumerate}
\item The number of colors in each category (follows from input) equals the sum of all subcategories in that category.
\item The number of vertices of each type equals the sum of all subcategories which color a vertex of that type. (ensuring that every vertex has been colored)
\end{enumerate}

As in the previous algorithm, if a solution exists, then the colors can be divided into subcategories as described above and the algorithm will return a satisfying evaluation.
On the other hand, if the algorithm returns a satisfying evaluation, we can immediately use it to construct a proper coloring.
\qed
\end{proof}

\section{Concluding Remarks}
We have presented three novel parameterized algorithms on neighborhood diversity, and shown that it indeed is a parameter capable of solving various difficult problems - even problems which are hard on graphs of bounded tree-width.
The applied methods may also be helpful for designing other such algorithms on neighborhood diversity.
However, there is still a lot of work to be done.
Some problems, such as the graph layout problems studied in \cite{flm08}, require a different approach if an FPT algorithm is to be found.

Additionally, while neighborhood diversity can attain low values even on graphs of high vertex cover (even on those of high tree-width in fact), it can also be up to single-exponentially larger than vertex cover.
This unfortunately means that even if a problem can be solved by an FPT algorithm on neighborhood diversity, it may sometimes be more efficient to in fact use vertex cover and not neighborhood diversity.
This is unfortunate, and it remains an open question whether another generalization of vertex cover could be found which could not be significantly larger than vertex cover.
\bibliographystyle{abbrv}

\bibliography{gtbibsofsem12}

\end{document}